\newcommand{\dprime}{{\prime\prime}}
\newtheorem{proposition}{Proposition}
\newtheorem{algorithm}{Algorithm}
\newtheorem{definition}{Definition}
\newtheorem{lemma}{Lemma}
\newtheorem{fact}{Fact}
\newcommand{\setn}{\mathcal{N}}
\newcommand{\setm}{\mathcal{M}}
\newcommand{\signal}[1]{{\boldsymbol{#1}}}
\newcommand{\Natural}{{\mathbb N}}
\newcommand{\real}{{\mathbb R}}
\newcommand{\refeq}[1]{(\ref{#1})}
\title{Power Estimation in {LTE} systems with the General\\ Framework of Standard Interference Mappings}
\name{R.~L.~G.~Cavalcante, E. Pollakis, and S. Sta\'nczak}
\address{Fraunhofer Heinrich Hertz Institute, Einsteinufer 37, 10587 Berlin, Germany\\
email: \{renato.cavalcante, emmanuel.pollakis,slawomir.stanczak\}@hhi.fraunhofer.de }
\begin{document}

\maketitle
\begin{abstract}
We devise novel techniques to obtain the downlink power inducing a given load in long-term evolution (LTE) systems, where we define load as the fraction of resource blocks in the time-frequency grid being requested by users from a given base station. These techniques are particularly important because previous studies have proved that the data rate requirement of users can be satisfied with lower transmit energy if we allow the load to increase. Those studies have also shown that obtaining the power assignment inducing a desired load profile can be posed as a fixed point problem involving standard interference mappings, but so far the mappings have not been obtained explicitly. One of our main contributions in this study is to close this gap. We derive an interference mapping having as its fixed point the power assignment inducing a desired load, assuming that such an assignment exists. Having this mapping in closed form, we simplify the proof of the aforementioned known results, and we also devise novel iterative algorithms for power computation that have many numerical advantages over previous methods.
\end{abstract}

\section{introduction}
In long-term evolution (LTE) systems, user data is transmitted in the time-frequency domain in basic units called resource blocks. In the current LTE standard, users served by the same base station do not cause interference to each other because base stations assign different resource blocks to their own users. However, owing to the scarcity of the wireless spectrum, the time-frequency grid is reused by the base stations, so users connected to different base stations can interfere with each other. This type of interference can severely limit the downlink data rates, so good interference coupling models are required to determine whether a given data rate requirement can be supported by the network \cite{ho2014,renato14SPM,Majewski2010,siomina12,feh2013,MajewskiTurkeHuangEtAl2007Analytical}, in which case the network is said to be feasible. This information is used for various network optimization tasks, such as energy optimization \cite{renato14SPM,pollakis12}.

To date, the feasibility of LTE-like networks are typically demonstrated by computing the fraction of resource blocks that each base station requires to support the traffic demand \cite{Majewski2010,siomina12}, where all network parameters, such as the downlink transmit power per resource block, are assumed to be fixed. This fraction is commonly referred to as load, and, as shown in \cite{feh2013,renato14SPM}, many previous results on load analysis in LTE-like systems can be unified and generalized by using the framework of interference calculus \cite{yates95,martin11,stanczak2009}. \par 

More recently, the study in \cite{ho2014} has highlighted the importance of a problem strongly related to that of load characterization in networks; namely, that of computing the downlink transmit power inducing a given load profile. By using standard load coupling models for LTE-like systems, the authors of that study prove that the users' rate requirements can be satisfied with lower transmit power if the load at each base station is allowed to increase. Furthermore, they also show that obtaining the power assignment of base stations inducing a given load profile can be posed as a fixed point problem involving interference mappings, and an algorithm to compute the fixed point is developed. However, the interference mappings have not been obtained in closed form, and the algorithm for power computation requires two nested iterative methods that only converge asymptotically. One of the shortcomings of the algorithm is that the inner iterative method may require many iterations to obtain good numerical accuracy, which can be costly in terms of computational time and effort. 

In this study, we derive, in closed-form, an interference mapping having as its fixed point the power allocation inducing a given load profile. By doing so, we are able to simplify the proof of some previous results, and we are also able to derive novel algorithms for power computation that do not require nested iterative techniques such as that in \cite{ho2014}. In particular, if we are given the task to recompute power assignments to increase load (e.g., to decrease the transmit energy as discussed above), we can derive simple algorithms that can give information about the precision of the power estimates at each iteration.

\section{Preliminaries}
In this section we reproduce standard results that are extensively used in this study, and we note that much of the material here can has been taken directly from \cite{renato14SPM}. Hereafter, inequalities involving vectors should be understood as element-wise inequalities. Furthermore, $\real_+$ denotes the set of non-negative real numbers, and $\real_{++}$ is the set of strictly positive numbers. Unless otherwise stated, the $i$th component of a vector $\signal{x}$ (vectors are always written in bold typeface) is denoted by $x_i$.

\begin{definition}
\label{definition.inter_func}
 (Interference functions and mappings \cite{yates95,martin11,stanczak2009}) A function $I: \real_+^M \to \real_{++}$ is said to be a standard interference function if the following properties hold:\par 
\begin{enumerate}
 \item ({\it Scalability}) $\alpha {I}(\signal{x})>I(\alpha\signal{x})$ for all $\signal{x}\in\real^M_+$ and all $\alpha>1$. \par
 \item ({\it Monotonicity}) ${I}(\signal{x}_1)\ge I(\signal{x}_2)$ if $\signal{x}_1\ge \signal{x}_2$. \par 
\end{enumerate}
Given $M$ standard interference functions $I_i:\real^M_+\to\real_{++}$, $i=1,\ldots,M$, we call the mapping $\mathcal{J}:\real^M_+\to\real_{++}^M$ with $\mathcal{J}(\signal{x}):=[I_1(\signal{x}),\ldots, I_M(\signal{x})]^T$ a ``standard interference mapping'' or simply ``interference mapping.''
\end{definition}

In the later sections, we estimate load and power of networks by computing fixed points of standard interference mappings, and the fact shown below is useful for this purpose.

\begin{fact} \label{fact.inter_map} (Properties of interference mappings \cite{yates95}) 
\begin{enumerate} 
\item If a standard interference mapping $\mathcal{J}:\real^M_{+}\to\real^M_{++}$ has a fixed point $\signal{x}\in\mathrm{Fix}(\mathcal{J}):=\{\signal{x}\in\real^M_{++}~|~\mathcal{J}(\signal{x})=\signal{x}\}$, then the fixed point is unique.
\item A standard interference mapping $\mathcal{J}:\real^M_+\to\real^M_{++}$ has a fixed point if and only if there exists $\signal{x}^\prime\in\real^M$ satisfying $\mathcal{J}(\signal{x}^\prime)\le\signal{x}^\prime$.
\item If a standard interference mapping $\mathcal{J}:\real^M_+\to\real^M_{++}$ has a fixed point, then it is the limit of the sequence $\{\signal{x}_n\}$ generated by $\signal{x}_{n+1}=\mathcal{J}(\signal{x}_n)$, where $\signal{x}_1\in\real^M_{+}$ is arbitrary. In particular, if $\signal{x}_1=\signal{0}$, then the sequence is monotonously increasing (in each component). In contrast, if $\signal{x}_1$ satisfies $\mathcal{J}(\signal{x}_1)\le \signal{x}_1$, then the sequence is monotonously decreasing (in each component).
\end{enumerate}
\end{fact}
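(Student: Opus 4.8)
The plan is to prove the three parts in the order stated, using only the two defining properties of standard interference functions (monotonicity and strict scalability) together with one standard additional property that I will invoke without reproving it here: every standard interference function is continuous. Throughout, all limits are taken componentwise, and I write $\signal{x}^\star$ for a fixed point of $\mathcal{J}$.

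For part 1, I would argue by a minimal-scaling contradiction. Given $\signal{x},\signal{y}\in\mathrm{Fix}(\mathcal{J})\subseteq\real^M_{++}$, the number $\alpha:=\max\{1,\max_i x_i/y_i\}\ge 1$ is the smallest scalar $\ge 1$ with $\signal{x}\le\alpha\signal{y}$. If $\alpha>1$, then monotonicity followed by strict scalability gives $\signal{x}=\mathcal{J}(\signal{x})\le\mathcal{J}(\alpha\signal{y})<\alpha\mathcal{J}(\signal{y})=\alpha\signal{y}$, the last inequality being strict in every component; hence $\signal{x}\le\alpha'\signal{y}$ for some $1\le\alpha'<\alpha$, contradicting minimality of $\alpha$. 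So $\alpha=1$, i.e. $\signal{x}\le\signal{y}$, and by symmetry $\signal{y}\le\signal{x}$, whence $\signal{x}=\signal{y}$.

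For part 2, the implication ``$\Rightarrow$'' is trivial: a fixed point $\signal{x}$ satisfies $\mathcal{J}(\signal{x})=\signal{x}\le\signal{x}$. For ``$\Leftarrow$'', suppose $\mathcal{J}(\signal{x}')\le\signal{x}'$; since $\signal{x}'\ge\mathcal{J}(\signal{x}')>\signal{0}$, the iteration $\signal{x}_1:=\signal{x}'$, $\signal{x}_{n+1}:=\mathcal{J}(\signal{x}_n)$ is well defined and, by induction on monotonicity ($\signal{x}_{n+1}\le\signal{x}_n\Rightarrow\signal{x}_{n+2}=\mathcal{J}(\signal{x}_{n+1})\le\mathcal{J}(\signal{x}_n)=\signal{x}_{n+1}$), is componentwise nonincreasing; being bounded below by $\signal{0}$ it converges to some $\signal{x}^\star\ge\signal{0}$, and passing to the limit in $\signal{x}_{n+1}=\mathcal{J}(\signal{x}_n)$ with continuity of $\mathcal{J}$ yields $\mathcal{J}(\signal{x}^\star)=\signal{x}^\star$; then $\signal{x}^\star=\mathcal{J}(\signal{x}^\star)\in\real^M_{++}$, so $\signal{x}^\star\in\mathrm{Fix}(\mathcal{J})$.

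For part 3, let $\signal{x}^\star$ be the (unique, by part 1) fixed point. I would first dispose of the two special initial conditions. If $\signal{x}_1=\signal{0}$, then $\signal{x}_2=\mathcal{J}(\signal{0})>\signal{0}=\signal{x}_1$, monotonicity propagates this to $\signal{x}_{n+1}\ge\signal{x}_n$, and since $\signal{x}_n\le\signal{x}^\star$ implies $\signal{x}_{n+1}=\mathcal{J}(\signal{x}_n)\le\mathcal{J}(\signal{x}^\star)=\signal{x}^\star$, the sequence is monotone and bounded, hence convergent; its limit is a fixed point by continuity, so it equals $\signal{x}^\star$. If instead $\mathcal{J}(\signal{x}_1)\le\signal{x}_1$, the argument of part 2 shows the sequence is componentwise nonincreasing and bounded below, and the same limiting argument identifies its limit with $\signal{x}^\star$. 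For an arbitrary $\signal{x}_1\in\real^M_+$ I would sandwich: let $\{\signal{z}_n\}$ be the iteration from $\signal{z}_1=\signal{0}$ and $\{\signal{u}_n\}$ the iteration from $\signal{u}_1=\alpha\signal{x}^\star$ with $\alpha:=\max\{1,\max_i x_{1,i}/x^\star_i\}$; then $\signal{u}_1\ge\signal{x}_1$ and, by scalability (or trivially if $\alpha=1$), $\mathcal{J}(\signal{u}_1)\le\alpha\mathcal{J}(\signal{x}^\star)=\signal{u}_1$, so both $\{\signal{z}_n\}$ and $\{\signal{u}_n\}$ fall under the special cases just handled and converge to $\signal{x}^\star$. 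Finally $\signal{z}_1\le\signal{x}_1\le\signal{u}_1$ is preserved by the iteration via monotonicity, giving $\signal{z}_n\le\signal{x}_n\le\signal{u}_n$ for all $n$, and the squeeze theorem delivers $\signal{x}_n\to\signal{x}^\star$. The individual steps are short; the part that carries the real content is this general case of part 3, and the main thing to get right is the choice of the dominating start $\alpha\signal{x}^\star$ together with the monotonicity bookkeeping that keeps the three iterations ordered, plus the (standard) appeal to continuity of $\mathcal{J}$ used to pass to the limit in each monotone sequence.
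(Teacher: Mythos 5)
The paper does not actually prove this Fact --- it reproduces it from the interference-calculus literature with a citation to Yates --- so your attempt can only be compared with the standard proof there. What you wrote is essentially that proof: the minimal-scaling contradiction for uniqueness, the monotone nonincreasing iteration started from a point with $\mathcal{J}(\signal{x}^\prime)\le\signal{x}^\prime$ for existence, and the sandwich between the increasing iteration from $\signal{0}$ and the decreasing iteration from $\alpha\signal{x}^\star$ for convergence from an arbitrary start. The monotonicity bookkeeping and the choice of the dominating start are all correct.

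The one step you should not wave at is the continuity of $\mathcal{J}$, for two reasons. First, as literally stated your auxiliary claim is false: a standard interference function need not be continuous on all of $\real^M_+$. In dimension one, $I(0)=1$ and $I(x)=2$ for $x>0$ is monotone and scalable (indeed $\alpha I(x)>I(\alpha x)$ for every $\alpha>1$ and every $x\ge 0$) yet discontinuous at the origin. What is true, and what your argument actually needs, is continuity on the open orthant $\real^M_{++}$, and this does follow from the two axioms: for $\alpha>1$ and $\tfrac{1}{\alpha}\signal{x}\le\signal{y}\le\alpha\signal{x}$ with $\signal{x}>\signal{0}$, monotonicity and scalability give $\tfrac{1}{\alpha}I(\signal{x})<I(\signal{y})<\alpha I(\signal{x})$, and letting $\alpha\downarrow 1$ yields continuity at $\signal{x}$. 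Second, you must then verify that every limit you pass to lies in that open orthant; this is immediate because each of your monotone sequences satisfies $\signal{x}_n\ge\mathcal{J}(\signal{0})>\signal{0}$ for $n\ge 2$, hence so does its limit. With that repair --- a correctly stated and briefly justified continuity lemma on $\real^M_{++}$, plus the observation that the limits are interior points --- the proof is complete.
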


In many cases, identifying interference functions by using the following results can be easier than by verifying the properties in Definition~\ref{definition.inter_func}. 

\begin{fact}
\label{fact.pos_conc}
 Concave functions $I:\real^M_{+}\to\real_{++}$ are standard interference functions \cite{renato14SPM}.
\end{fact}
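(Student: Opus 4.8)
The plan is to verify directly the two defining properties in Definition~\ref{definition.inter_func} --- monotonicity and scalability --- from concavity alone, the crucial extra ingredient being that $I$ is \emph{strictly positive} everywhere on $\real_+^M$, and in particular $I(\signal{0})>0$.

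Scalability is the easy half. Given $\signal{x}\in\real_+^M$ and $\alpha>1$, I would write $\signal{x}$ as the convex combination $\signal{x}=\frac{1}{\alpha}(\alpha\signal{x})+\bigl(1-\frac{1}{\alpha}\bigr)\signal{0}$, which lies in $\real_+^M$, and apply concavity to obtain $I(\signal{x})\ge \frac{1}{\alpha}I(\alpha\signal{x})+\bigl(1-\frac{1}{\alpha}\bigr)I(\signal{0})$. Multiplying through by $\alpha$ gives $\alpha I(\signal{x})\ge I(\alpha\signal{x})+(\alpha-1)I(\signal{0})$, and since $(\alpha-1)I(\signal{0})>0$ this is the strict inequality $\alpha I(\signal{x})>I(\alpha\signal{x})$ required by scalability.

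For monotonicity I would first isolate the one-dimensional fact: a concave function $g:\real_+\to\real_{++}$ is nondecreasing. Indeed, if $g(a)>g(b)$ for some $0\le a<b$, then the non-increasing-slopes property of concave functions yields, for every $c>b$, $\frac{g(c)-g(b)}{c-b}\le\frac{g(b)-g(a)}{b-a}=-m$ with $m:=\frac{g(a)-g(b)}{b-a}>0$, so $g(c)\le g(b)-m(c-b)\to-\infty$, contradicting positivity. Then, given $\signal{x}_1\ge\signal{x}_2\ge\signal{0}$, I would restrict $I$ to the segment $t\mapsto\signal{x}_2+t(\signal{x}_1-\signal{x}_2)$; because $\signal{x}_1-\signal{x}_2\ge\signal{0}$ this stays in $\real_+^M$ for all $t\ge0$, so $g(t):=I(\signal{x}_2+t(\signal{x}_1-\signal{x}_2))$ is concave and strictly positive on $\real_+$, hence nondecreasing by the sub-lemma, giving $I(\signal{x}_1)=g(1)\ge g(0)=I(\signal{x}_2)$.

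I expect the scalability step to be essentially immediate, so the only real content --- and the place a reader might suspect a gap --- is the monotonicity argument, whose heart is the observation that a strictly positive concave function on a half-line cannot be decreasing anywhere; the passage from $M$ dimensions to this scalar statement via restriction to a coordinatewise-increasing segment is routine but worth spelling out. Note that the claim is only the sufficiency direction (concave and positive $\Rightarrow$ standard interference function), so no converse needs to be addressed.
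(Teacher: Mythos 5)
Your proof is correct and complete: the scalability step via the convex combination $\signal{x}=\tfrac{1}{\alpha}(\alpha\signal{x})+(1-\tfrac{1}{\alpha})\signal{0}$ together with $I(\signal{0})>0$, and the monotonicity step via the observation that a positive concave function on a half-line cannot decrease anywhere (else the decreasing-slope property forces it to $-\infty$), is exactly the standard argument. The paper itself states this fact without proof, deferring to the cited reference, and your argument reproduces the proof given there, so there is nothing to flag.
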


In turn, to prove that a given function is concave, we can start with a simple function that is known to be concave and reconstruct the function under consideration by using operations that preserve concavity.
\begin{fact} (Selected concavity preserving operations)
\label{fact.preserve}
Let $f:\real^M\to\real$ be a concave function. We can use the following operations to obtain new concave functions \cite[Sect.~2.3]{boyd} \cite[Sect.~8.2]{baus11}:
\begin{enumerate}
 \item Let $\signal{A}\in\real^{M\times N}$ and $\signal{b}\in\real^M$ be arbitrary, and define $g:\real^N\to\real^M:\signal{x}\mapsto \signal{Ax}+\signal{b}$. Then $f\circ g:\real^N\to\real:\signal{x}\mapsto f(g(\signal{x}))$ is concave.
 \item (Perspective) The perspective function $g:\real^M\times \real_{++}\to\real:(\signal{x},P)\mapsto P\cdot~f(1/P~\signal{x})$ (associated to $f$) is concave. 
 \item (Dimension reduction) Fixing arguments of concave functions preserve concavity. For example, the function $g:\real^{N-1}\to\real:\signal{x}\mapsto f([\signal{x}^T~ 1]^T)$, which is obtained by fixing the last element argument of the function $f$ to one, is a concave function.
 \item (Scalar multiplication and addition) Concave functions are preserved under addition and multiplication by strictly positive constants.
\end{enumerate}

\end{fact}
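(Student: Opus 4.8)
The plan is to verify each of the four items directly from the definition of concavity, obtaining the later items from the earlier ones wherever possible. Throughout, I fix two points $\signal{x}_1,\signal{x}_2$ in the relevant domain, a scalar $\theta\in[0,1]$, and write $\bar\theta:=1-\theta$ for brevity.

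First I would handle items 1, 3, and 4. For item 1, since $g$ is affine we have $g(\theta\signal{x}_1+\bar\theta\signal{x}_2)=\theta g(\signal{x}_1)+\bar\theta g(\signal{x}_2)$, so applying concavity of $f$ to the right-hand side yields $f(g(\theta\signal{x}_1+\bar\theta\signal{x}_2))\ge\theta f(g(\signal{x}_1))+\bar\theta f(g(\signal{x}_2))$, which is exactly concavity of $f\circ g$. Item 3 is then immediate, because the map $\signal{x}\mapsto[\signal{x}^T~1]^T$ is affine (it equals $\signal{x}\mapsto\signal{A}\signal{x}+\signal{b}$ with $\signal{A}=[\signal{I}~\signal{0}]^T$ and $\signal{b}=[\signal{0}^T~1]^T$), so it is a special case of item 1. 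For item 4, if $f$ and $h$ are concave and $\alpha>0$, then multiplying the concavity inequality for $f$ by $\alpha$ and adding the one for $h$ shows that $\alpha f+h$ is concave; taking $h$ to be a constant function also covers addition of a constant.

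The only item that calls for a short computation is item 2. Given $(\signal{x}_1,P_1),(\signal{x}_2,P_2)\in\real^M\times\real_{++}$, I would set $P:=\theta P_1+\bar\theta P_2>0$ and use the identity
\[
\frac{\theta\signal{x}_1+\bar\theta\signal{x}_2}{P}=\frac{\theta P_1}{P}\cdot\frac{\signal{x}_1}{P_1}+\frac{\bar\theta P_2}{P}\cdot\frac{\signal{x}_2}{P_2},
\]
whose right-hand side is a convex combination of $\signal{x}_1/P_1$ and $\signal{x}_2/P_2$, since the weights $\theta P_1/P$ and $\bar\theta P_2/P$ are nonnegative and sum to one. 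Applying concavity of $f$ and multiplying through by $P>0$ then gives
\[
g\bigl(\theta(\signal{x}_1,P_1)+\bar\theta(\signal{x}_2,P_2)\bigr)=P\,f\!\left(\tfrac{\theta\signal{x}_1+\bar\theta\signal{x}_2}{P}\right)\ge\theta P_1\,f\!\left(\tfrac{\signal{x}_1}{P_1}\right)+\bar\theta P_2\,f\!\left(\tfrac{\signal{x}_2}{P_2}\right),
\]
and the right-hand side equals $\theta\,g(\signal{x}_1,P_1)+\bar\theta\,g(\signal{x}_2,P_2)$, which is the desired inequality. I do not anticipate a genuine obstacle: these are classical facts (cf.\ the cited references), and the only place requiring a little care is the bookkeeping of the weights in the perspective argument so that they remain nonnegative and sum to one, the degenerate cases $\theta\in\{0,1\}$ being trivial.
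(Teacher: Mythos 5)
Your proof is correct. The paper itself offers no proof of this fact---it is stated with citations to Boyd and Vandenberghe and to Bauschke and Combettes---and your verification (affine precomposition for items 1 and 3, the weight-rebalancing identity for the perspective function in item 2, and direct summation of the concavity inequalities for item 4) is exactly the standard argument given in those references.
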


\section{Power estimation in wireless networks}

\subsection{System model and problem statement}
In this study, we use a LTE interference model that has been studied for many years in the literature \cite{ho2014,renato14SPM,Majewski2010,siomina12,feh2013,MajewskiTurkeHuangEtAl2007Analytical}. In more detail, we denote by $\setm=\{1,\ldots,M\}$ the set of $M$ LTE base stations serving at least one user, and by $\setn=\{1,\ldots,N\}$ the set of $N$ users requesting service from base stations. Each user $j\in\setn$ requests a fixed data rate $d_j\in\real_{++}$, and $\setn_i$ is the set of users connected to the $i$th base station. We assume that pathloss between user $j\in\setn$ and base station $i\in\setm$ is denoted by $g_{i,j}\in\real_{++}$, and note that the assumption $g_{i,j}\neq 0$ is used for brevity. The effective bandwidth of each resource block is denoted by $B\in\real_{++}$, and there are $K$ resource blocks in the system. Each base station $i\in\setm$ transmits with fixed power $p_i\in\real_{++}$ per resource block. If user $j$ is served by base station $i$, the reliable downlink data rate per resource block is approximated by:
\begin{align}
 \omega_{i,j}(\signal{\nu},\signal{p})=B\log_2\left(1+\dfrac{p_i g_{i,j}}{\sum_{k\in\setm\backslash\{i\}}\nu_k p_k g_{k,j}+\sigma^2}\right),
\end{align}
where $\sigma^2$ is the noise power per resource block, $\signal{p}=[p_1,\ldots,p_M]^T$ is the downlink power vector per resource block, and $\signal{\nu}=[\nu_1,\ldots,\nu_M]^T$ is the load at the base stations. Here, the load $\nu_i$ at the $i$th base station is the fraction of resource blocks being used at base station $i$ for data transmission. For a fixed power assignment $\signal{p}\in\real_{++}^M$, the load vector can be obtained by solving the following system of nonlinear equations:
\begin{align}
\label{eq.nonlinear}
 \nu_i = \sum_{j\in\setn_i} \dfrac{d_j}{K\omega_{i,j}(\signal{\nu},\signal{p})}, ~i\in\setm,
\end{align}
or, equivalently, by computing the fixed point of the mapping 
\begin{align}
\label{eq.load_mapping}
\mathcal{J}_\signal{p}:\real^M_+\to\real^M_{++}:\signal{\nu}\mapsto[I_{\signal{p},1}(\signal{\nu}),\ldots,I_{\signal{p},M}(\signal{\nu})]^T,
\end{align}
where $I_{\signal{p},i}(\signal{\nu}):=\sum_{j\in\setn_i} \dfrac{d_j}{K\omega_{i,j}(\signal{\nu},\signal{p})}$. The mapping $\mathcal{J}_\signal{p}$ is a standard interference mapping \cite{renato14SPM,feh2013,ho2014}, so, by Fact.~\ref{fact.inter_map}, the fixed point, if it exists, can be obtained, for example, with the standard iterative algorithm $\signal{\nu}_{n+1}=\mathcal{J}_\signal{p}(\signal{\nu}_n)$ with $\signal{\nu}_1\in\real_+^M$ arbitrary. Note that, if the fixed point $\signal{\nu}^\star\in\mathrm{Fix}(\mathcal{J}_\signal{p})$ exists, the total transmit power of base station $i\in\setm$ is given by $K\nu_i^\star p_i$. \par 

Recently, the study in \cite{ho2014} has highlighted the importance of the reverse problem; namely, that of solving the nonlinear system in \refeq{eq.nonlinear} for the power allocation $\signal{p}$ with the load $\signal{\nu}$ fixed. In particular, energy efficiency power allocations can be obtained by solving the reverse problem, which is the problem we study here.

\subsection{Interference functions for the computation of the power vector}
To solve the nonlinear system in \refeq{eq.nonlinear} for the power vector $\signal{p}\in\real_{++}^M$, with the load $\signal{\nu}\in\real_{++}^M$ and all other parameters of the model remaining fixed, we start by multiplying both sides of \refeq{eq.nonlinear} by $p_i/\nu_i>0$:
\begin{align}
\label{eq.nonlinear2}
 p_i = \tilde{\mathcal{P}}_{\signal{\nu},i}(\signal{p}),\quad i\in\setm,
\end{align}
where 
\begin{align}
\label{eq.func_power}
\tilde{\mathcal{P}}_{\signal{\nu},i}:\real^M_{++}\to\real_{++}:\signal{p}\mapsto\mathcal\mathcal{}\dfrac{p_i}{\nu_i} \sum_{j\in\setn_i} \dfrac{d_j}{K\omega_{i,j}(\signal{\nu},\signal{p})},
\end{align}
 Note that, by construction, $\signal{p}\in\real^M_{++}$ solves the system in \refeq{eq.nonlinear} if and only if it also solves the nonlinear system in \refeq{eq.nonlinear2}. In the remaining of this subsection, we show that if these systems have a solution, the solution is the fixed point of a standard interference mapping that we obtain in closed form.  We start with the following simple result.
 
\begin{proposition}
\label{lemma.concavity}
 The function $\tilde{\mathcal{P}}_{\signal{\nu},i}:\real^M_{++}\to\real_{++}$ defined in \refeq{eq.func_power} is concave for every $i\in\setm$.
\end{proposition}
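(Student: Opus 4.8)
The plan is to write $\tilde{\mathcal{P}}_{\signal{\nu},i}$ as a nonnegative combination of simple concave building blocks and then let the concavity-preserving operations of Fact~\ref{fact.preserve} do the work. First I would rewrite the per-resource-block rate as $\omega_{i,j}(\signal{\nu},\signal{p})=B\log_2\bigl(1+g_{i,j}p_i/a_{i,j}(\signal{p})\bigr)$, where $a_{i,j}(\signal{p}):=\sum_{k\in\setm\backslash\{i\}}\nu_k p_k g_{k,j}+\sigma^2$ is an affine function of $\signal{p}$ that does not depend on $p_i$ and is strictly positive on $\real^M_{++}$ (using $\sigma^2>0$). Each term of \refeq{eq.func_power} then has the form
\[
\frac{p_i}{K\,\omega_{i,j}(\signal{\nu},\signal{p})}=\frac{\ln 2}{KB}\,F_{i,j}\bigl(p_i,\,a_{i,j}(\signal{p})\bigr),\qquad F_{i,j}(x,y):=\frac{x}{\ln(1+g_{i,j}x/y)},
\]
so $\tilde{\mathcal{P}}_{\signal{\nu},i}(\signal{p})=\sum_{j\in\setn_i}\frac{d_j}{K\nu_i}\cdot\frac{p_i}{\omega_{i,j}(\signal{\nu},\signal{p})}$ is a sum, with the strictly positive weights $d_j\ln 2/(KB\nu_i)$, of the maps $\signal{p}\mapsto F_{i,j}(p_i,a_{i,j}(\signal{p}))$. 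By item~4 of Fact~\ref{fact.preserve} it therefore suffices to prove that each such map is concave on $\real^M_{++}$.

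Next I would strip off the affine part and reduce to one variable. The map $\signal{p}\mapsto(p_i,a_{i,j}(\signal{p}))$ is affine from $\real^M_{++}$ into $\real_{++}\times\real_{++}$, so by item~1 of Fact~\ref{fact.preserve} it is enough that $F_{i,j}$ be concave on $\real_{++}\times\real_{++}$. The key observation is that $F_{i,j}$ is a perspective: $F_{i,j}(x,y)=y\,\psi_{i,j}(x/y)$ with $\psi_{i,j}(s):=s/\ln(1+g_{i,j}s)$. Hence, by the perspective construction (item~2 of Fact~\ref{fact.preserve}, applied on the positive orthant), concavity of $F_{i,j}$ follows from concavity of $\psi_{i,j}$ on $\real_{++}$; and composing with the linear rescaling $s\mapsto g_{i,j}s$ (item~1 again, since $g_{i,j}>0$), this in turn reduces to concavity on $(0,\infty)$ of the single, parameter-free function $\phi(s):=s/\ln(1+s)$.

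The one genuine computation --- and the step I expect to be the main obstacle, since everything above is just bookkeeping with the facts in the Preliminaries --- is showing that $\phi$ is concave. I would differentiate twice and clear the strictly positive factors $(1+s)^2$ and $(\ln(1+s))^3$, which leaves $\phi''(s)$ with the same sign as $h(s):=2s-(s+2)\ln(1+s)$ on $(0,\infty)$. Then $h(0)=0$, $h'(0)=0$, and $h''(s)=-s/(1+s)^2<0$ for $s>0$, so $h'$ is strictly decreasing from $h'(0)=0$, hence $h'<0$ and $h$ is strictly decreasing from $h(0)=0$, which gives $h<0$ and therefore $\phi''<0$ on $(0,\infty)$. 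Unwinding the reductions above then yields the concavity of $\tilde{\mathcal{P}}_{\signal{\nu},i}$ on $\real^M_{++}$.
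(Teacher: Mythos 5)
Your proof is correct and follows essentially the same route as the paper's: both reduce concavity of $\tilde{\mathcal{P}}_{\signal{\nu},i}$ to concavity of a single scalar function via the perspective operation (Fact~\ref{fact.preserve}.2), affine pre-composition (Fact~\ref{fact.preserve}.1), and positive combinations (Fact~\ref{fact.preserve}.4); you merely apply the perspective before the affine substitution rather than after, and your base function $\phi(s)=s/\ln(1+s)$ is the paper's $f^{(1)}(x)=1/\log_2(1+1/x)$ in disguise (the two are related by an inversion/perspective). The one substantive addition is that you actually verify concavity of the scalar building block via the second-derivative computation with $h(s)=2s-(s+2)\ln(1+s)<0$, a step the paper simply asserts ``by noticing.''
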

\begin{proof}
 Let $\signal{p}_{-i}\in\real^{M-1}_{++}$ be a power vector obtained by excluding the $i$th component of the power vector $\signal{p}$, where $i$ is arbitrary. By noticing that the function $f^{(1)}:\real\to\real:x\mapsto 1/\log_2(1+1/x)$ is concave, by Fact.~\ref{fact.preserve}.1, we readily verify that the function
 
 \begin{align*} 
f_{i,j}^{(2)}:\real^{M}_{++}&\to\real_{++} \\ 
\left[\begin{matrix}\signal{p}_{-i}\\y\end{matrix}\right] &\mapsto \dfrac{d_j}{BK \log_2\left(1+\dfrac{g_{i,j}}{\sum_{k\in\setm\backslash\{i\}}\nu_k p_k g_{k,j}+y}\right)}
\end{align*}
is concave for arbitrary $i\in\setm$ and $j\in\setn$. Therefore, by Fact.~\ref{fact.preserve}.2, the function $f_{i,j}^{(3)}:\real_{++}^{M+1}\to\real_{++}:\left[\begin{matrix}\signal{p}\\ y\end{matrix}\right]\mapsto p_i f_{i,j}^{(2)}\left(\dfrac{1}{p_i}\left[\begin{matrix}\signal{p}_{-i} \\ y\end{matrix}\right]\right)$ is concave. We can now fix $y=\sigma^2$ and apply Fact.~\ref{fact.preserve}.3 to $f_{i,j}^{(3)}$ to show that $f_{i,j}^{(4)}(\signal{p}):=p_i d_j/(K\omega_{i,j}(\signal{\nu},\signal{p}))$ is concave. Concavity of $\mathcal{P}_{\signal{\nu},i}$ now follows from this last result and Fact.~\ref{fact.preserve}.4.
\end{proof}

We now continuously extend $\tilde{\mathcal{P}}_{\signal{\nu},i}$ to the closure of its domain (the proof of the next lemma will be shown elsewhere).

\begin{lemma}
\label{lemma.extension}
 For every $i\in\setm$, the concave function $\tilde{\mathcal{P}}_{\signal{\nu},i}:\real^M_{++}\to\real_{++}$ can be continuously extended to the domain $\real_{+}^M$. This extension, denoted by $\mathcal{P}_{\signal{\nu},i}$, which is also a concave function, is given by
 \begin{align}
 \label{eq.extension}
  \mathcal{P}_{\signal{\nu},i}(\signal{p})=\begin{cases}
       \dfrac{p_i}{\nu_i} \sum_{j\in\setn_i} \dfrac{d_j}{K\omega_{i,j}(\signal{\nu},\signal{p})},
 \quad \mathrm{if}~~ p_i\ne 0  \\ 
 \sum_{j\in\setn_i} \dfrac{d_j\ln 2}{KBg_{i,j}\nu_i}\left(\sum_{k\in\setm\backslash\{i\}}\nu_k p_k g_{k,j}+\sigma^2\right), \\ \qquad\qquad \mathrm{otherwise,}
      \end{cases}
 \end{align}
 and its codomain is $\real_{++}$.
\end{lemma}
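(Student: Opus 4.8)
The plan is to write down the extension explicitly as a finite sum of manifestly continuous functions on $\real^M_+$, and then read off concavity and the codomain claim from that representation. For $j\in\setn_i$, introduce $h_j(\signal{p}):=\sum_{k\in\setm\backslash\{i\}}\nu_k p_k g_{k,j}+\sigma^2$; this is an affine (hence continuous) function of $\signal{p}$ that does not involve $p_i$ and satisfies $h_j(\signal{p})\ge\sigma^2>0$ on $\real^M_+$, and with it $\omega_{i,j}(\signal{\nu},\signal{p})=B\log_2\!\left(1+p_ig_{i,j}/h_j(\signal{p})\right)$.

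First I would isolate the only source of trouble, the $0\cdot\infty$ behaviour of $\tilde{\mathcal{P}}_{\signal{\nu},i}$ as $p_i\to 0$, by routing it through a single scalar function. Define $\psi:(-1,\infty)\to\real_{++}$ by $\psi(u):=u/\ln(1+u)$ for $u\ne 0$ and $\psi(0):=1$; then $\psi$ is continuous and strictly positive on $[0,\infty)$. A direct computation (using $\log_2 x=\ln x/\ln 2$ and $\ln(1+u)=u/\psi(u)$) shows that, for every $\signal{p}\in\real^M_{++}$,
\[
\frac{p_i}{\nu_i}\cdot\frac{d_j}{K\,\omega_{i,j}(\signal{\nu},\signal{p})}
=\frac{d_j\ln 2}{KB\,g_{i,j}\,\nu_i}\;h_j(\signal{p})\;\psi\!\left(\frac{p_i g_{i,j}}{h_j(\signal{p})}\right).
\]
The right-hand side, however, makes sense and is continuous for all $\signal{p}\in\real^M_+$ (it is a finite composition of continuous maps, using $h_j>0$ everywhere), and at a point with $p_i=0$ it equals $\frac{d_j\ln 2}{KB g_{i,j}\nu_i}h_j(\signal{p})$ because $\psi(0)=1$. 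Summing over $j\in\setn_i$ therefore produces a continuous function $\real^M_+\to\real$ that agrees with $\tilde{\mathcal{P}}_{\signal{\nu},i}$ on the dense subset $\real^M_{++}$ and with the second branch of \refeq{eq.extension} on $\{p_i=0\}$. Since a continuous extension from a dense subset is unique, this function is $\mathcal{P}_{\signal{\nu},i}$ and formula \refeq{eq.extension} follows. The codomain claim is then immediate from the displayed expression: each factor is strictly positive ($d_j>0$, $g_{i,j}>0$, $\nu_i>0$, $\sigma^2>0$ so $h_j>0$, and $\psi>0$ on $[0,\infty)$), and $\setn_i\ne\emptyset$ since every base station in $\setm$ serves at least one user; hence $\mathcal{P}_{\signal{\nu},i}(\signal{p})\in\real_{++}$.

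It remains to show that $\mathcal{P}_{\signal{\nu},i}$ is concave on $\real^M_+$, and this follows from concavity of $\tilde{\mathcal{P}}_{\signal{\nu},i}$ on the open convex set $\real^M_{++}$ (Proposition~\ref{lemma.concavity}) by a routine limiting argument: given $\signal{x},\signal{y}\in\real^M_+$ and $\lambda\in[0,1]$, the points $\signal{x}+\tfrac1n\signal{1}$ and $\signal{y}+\tfrac1n\signal{1}$ lie in $\real^M_{++}$, so $\tilde{\mathcal{P}}_{\signal{\nu},i}\!\left(\lambda(\signal{x}+\tfrac1n\signal{1})+(1-\lambda)(\signal{y}+\tfrac1n\signal{1})\right)\ge\lambda\,\tilde{\mathcal{P}}_{\signal{\nu},i}(\signal{x}+\tfrac1n\signal{1})+(1-\lambda)\,\tilde{\mathcal{P}}_{\signal{\nu},i}(\signal{y}+\tfrac1n\signal{1})$, and letting $n\to\infty$ while invoking continuity of $\mathcal{P}_{\signal{\nu},i}$ yields the concavity inequality for $\signal{x},\signal{y}$.

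The only genuine obstacle is the first step: one must resolve the singularity of $\tilde{\mathcal{P}}_{\signal{\nu},i}$ at $p_i=0$ in a way that gives joint continuity in all of $\signal{p}$, not merely existence of a limit along coordinate directions. Packaging the singular behaviour through the scalar function $\psi$ makes this transparent — the point being that the ``denominator'' $h_j(\signal{p})$ is bounded away from $0$ on $\real^M_+$, so the argument of $\psi$ tends to $0$ whenever $p_i\to 0$, irrespective of how the remaining coordinates of $\signal{p}$ behave.
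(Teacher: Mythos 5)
The paper itself does not prove this lemma --- it explicitly defers the proof (``the proof of the next lemma will be shown elsewhere''), so there is no in-paper argument to compare against. Your proposal is a correct and complete proof. The key identity
\[
\frac{p_i}{\nu_i}\cdot\frac{d_j}{K\,\omega_{i,j}(\signal{\nu},\signal{p})}
=\frac{d_j\ln 2}{KB\,g_{i,j}\,\nu_i}\,h_j(\signal{p})\,\psi\!\left(\frac{p_i g_{i,j}}{h_j(\signal{p})}\right),
\qquad \psi(u)=\frac{u}{\ln(1+u)},\ \psi(0)=1,
\]
checks out, and routing the $0\cdot\infty$ singularity through the single scalar function $\psi$, whose argument is forced to $0$ uniformly as $p_i\to 0$ because $h_j\ge\sigma^2>0$ on all of $\real_+^M$, is exactly the right way to get \emph{joint} continuity rather than continuity along coordinate directions; it also makes the second branch of \refeq{eq.extension} drop out immediately from $\psi(0)=1$. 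The remaining pieces are likewise sound: uniqueness of a continuous extension from the dense subset $\real^M_{++}$ pins down the formula, strict positivity follows term by term together with $\setn_i\neq\emptyset$ (each base station in $\setm$ serves at least one user), and the concavity of the extension follows from Proposition~\ref{lemma.concavity} by the standard perturbation $\signal{x}\mapsto\signal{x}+\tfrac1n\signal{1}$ and passage to the limit using the continuity you just established. One could alternatively obtain concavity of the extension abstractly (the closure of a concave function, or an upper-semicontinuous-hull argument), but your explicit limiting argument is more elementary and self-contained.
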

 
The next proposition shows that the solution of the system in \refeq{eq.nonlinear2} (or, equivalently, \refeq{eq.nonlinear} with $\signal{p}$ being the variable to be determined) is the fixed point of a standard interference mapping.
\begin{proposition}
\label{proposition.main_result}
 Define the mapping $\mathcal{P}_\signal{\nu}:\real^M_+\to\real^M_{++}$ by $\mathcal{P}_\signal{\nu}(\signal{p}):=[\mathcal{P}_{\signal{\nu},1}(\signal{p}),\ldots,\mathcal{P}_{\signal{\nu},M}(\signal{p})]^T$, where $\mathcal{P}_{\signal{\nu},i}$ is given in \refeq{eq.extension}. Then $\mathcal{P}_\signal{\nu}$ is a standard interference mapping, and its fixed point, if it exists, is unique, and it coincides with the solution of the nonlinear system in \refeq{eq.nonlinear2}.
\end{proposition}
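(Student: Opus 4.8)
The plan is to dispatch the three assertions in turn, each of which follows quickly from results already in place. \textbf{$\mathcal{P}_\signal{\nu}$ is a standard interference mapping.} By Lemma~\ref{lemma.extension}, for every $i\in\setm$ the extension $\mathcal{P}_{\signal{\nu},i}:\real_+^M\to\real_{++}$ is concave with codomain $\real_{++}$. Fact~\ref{fact.pos_conc} then immediately gives that each $\mathcal{P}_{\signal{\nu},i}$ is a standard interference function, i.e.\ it satisfies the scalability and monotonicity conditions of Definition~\ref{definition.inter_func}. Stacking the $M$ coordinate functions, $\mathcal{P}_\signal{\nu}=[\mathcal{P}_{\signal{\nu},1},\ldots,\mathcal{P}_{\signal{\nu},M}]^T$ is, by the last part of Definition~\ref{definition.inter_func}, a standard interference mapping from $\real_+^M$ into $\real_{++}^M$.

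\textbf{Uniqueness of the fixed point.} Once $\mathcal{P}_\signal{\nu}$ is known to be a standard interference mapping, uniqueness of its fixed point (whenever one exists) is exactly Fact~\ref{fact.inter_map}.1, with nothing further to prove.

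\textbf{Coincidence with the solution of \refeq{eq.nonlinear2}.} The key observation is that the codomain of $\mathcal{P}_\signal{\nu}$ is $\real_{++}^M$, so any fixed point $\signal{p}=\mathcal{P}_\signal{\nu}(\signal{p})$ necessarily lies in $\real_{++}^M$; hence the continuous extension cannot introduce spurious fixed points on the boundary of $\real_+^M$, and on the relevant set $\real_{++}^M$ only the first branch of \refeq{eq.extension} is active, i.e.\ $\mathcal{P}_{\signal{\nu},i}(\signal{p})=\tilde{\mathcal{P}}_{\signal{\nu},i}(\signal{p})$ for all $i\in\setm$. Therefore $\signal{p}\in\real_{++}^M$ is a fixed point of $\mathcal{P}_\signal{\nu}$ if and only if $p_i=\tilde{\mathcal{P}}_{\signal{\nu},i}(\signal{p})$ for every $i\in\setm$, which is precisely the nonlinear system \refeq{eq.nonlinear2}; conversely every solution of \refeq{eq.nonlinear2} is, by the same identity, a fixed point of $\mathcal{P}_\signal{\nu}$. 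Recalling from the derivation preceding \refeq{eq.nonlinear2} that a vector $\signal{p}\in\real_{++}^M$ solves \refeq{eq.nonlinear} if and only if it solves \refeq{eq.nonlinear2}, the claim follows.

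\textbf{Where the real work sits.} The present proposition is essentially bookkeeping that glues together Facts~\ref{fact.pos_conc} and~\ref{fact.inter_map} with the construction of $\tilde{\mathcal{P}}_{\signal{\nu},i}$; the only substantive ingredient is Lemma~\ref{lemma.extension}, whose proof is deferred. What matters there, and what I would regard as the main obstacle, is verifying that the continuous extension to $\real_+^M$ not only exists and stays concave, but also keeps its codomain equal to $\real_{++}$ (not merely $\real_+$): it is this strict positivity that both qualifies the coordinate maps as interference functions via Fact~\ref{fact.pos_conc} and forces every fixed point into $\real_{++}^M$, which is exactly what makes the identification with \refeq{eq.nonlinear2} work.
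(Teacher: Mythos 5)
Your proof is correct and follows essentially the same route as the paper: Lemma~\ref{lemma.extension} plus Fact~\ref{fact.pos_conc} to establish the standard interference mapping property, Fact~\ref{fact.inter_map}.1 for uniqueness, and strict positivity of the fixed point to identify it with the solution of \refeq{eq.nonlinear2}. Your third step is in fact spelled out more carefully than the paper's one-line "these facts imply the equivalence," making explicit that positivity of the codomain rules out boundary fixed points and activates the first branch of \refeq{eq.extension}.
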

\begin{proof}
We have already proved in Lemma~\ref{lemma.extension} that $\mathcal{P}_{\signal{\nu},i}$ is a positive concave function for every $i\in\setm$. As a result, we can apply Fact.~\ref{fact.pos_conc} to conclude that the mapping $\mathcal{P}_\signal{\nu}$ is a standard interference mapping. By Fact.~\ref{fact.inter_map}.1, the fixed point $\signal{p}^\star\in\mathrm{Fix}(\mathcal{P}_\signal{\nu})$, if it exists, is unique (and strictly positive). These facts imply the equivalence between the solution of the nonlinear system in \refeq{eq.nonlinear2} and the fixed point $\signal{p}^\star$ of $\mathcal{P}_\signal{\nu}$.
\end{proof}

A practical consequence of the above proposition is that the power assignment $\signal{p}$ inducing a given load $\signal{\nu}$ (if it exists) is the limit of the sequence $\{\signal{p}_n\}$ generated by $\signal{p}_{n+1}=\mathcal{P}_\signal{\nu}(\signal{p}_n)$, where $\signal{p}_1\in\real_{+}^M$ is arbitrary. Note that this simple iterative scheme eliminates the need for the bisection technique required by the scheme in \cite{ho2014}. 

For the reasons shown below, we are often interested in increasing the load of the current network configuration by changing the power $\signal{p}$, and, for this task, we can devise an iterative algorithm that also provides information about the precision obtained at each iteration. 

\subsection{Iterative algorithms for power planning}
\label{sect.power_planning}

Suppose that a power assignment $\signal{p}^\prime$ induces a load $\signal{\nu}^\prime$. Now, assume that we increase the load from $\signal{\nu}^\prime$ to $\signal{\nu}^\dprime\ge \signal{\nu}^\prime$ (with $\signal{\nu}^\prime\neq\signal{\nu}^\dprime$) by changing the power from $\signal{p}^\prime$ to $\signal{p}^\dprime$ while keeping all other parameters of the model fixed. In Proposition~\ref{proposition.load} below, we prove that $\signal{p}^\dprime<\signal{p}^\prime$ and that $\nu_i^\dprime p_i^\dprime < \nu_i^\prime p_i^\prime$ for every $i\in\setm$ . In particular, this last inequality shows that the users' data rate requirements can be satisfied with lower transmit power if we allow the load to increase. We emphasize that this conclusion is not our original contribution because it has been originally obtained in \cite{ho2014}. However, our proof is new because it uses the interference mapping $\mathcal{P}_\signal{\nu}$ obtained in Proposition~\ref{proposition.main_result}. The results in Proposition~\ref{proposition.load} are also used to derive a novel algorithm for power computation, and the proof of this proposition requires the following lemma.

\begin{lemma}
\label{lemma.log}
 The function $f:\real_{++}\to\real_{++}:x\mapsto x \ln(1+1/x)$ is strictly increasing; i.e., $y,x\in\real_{++}$ with $y>x$ implies $f(y)>f(x)$.
\end{lemma}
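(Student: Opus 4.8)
The plan is to prove strict monotonicity by showing that $f$ is differentiable on $\real_{++}$ with a strictly positive derivative. First I would write $f(x)=x\bigl(\ln(x+1)-\ln x\bigr)$ and differentiate to obtain
\[
f'(x)=\ln\!\left(1+\frac{1}{x}\right)+x\left(\frac{1}{x+1}-\frac{1}{x}\right)=\ln\!\left(1+\frac{1}{x}\right)-\frac{1}{x+1}.
\]
Hence it suffices to establish the inequality $\ln(1+1/x)>1/(x+1)$ for every $x\in\real_{++}$. Substituting $t:=1/x\in\real_{++}$, this is equivalent to the elementary inequality $\ln(1+t)>t/(1+t)$ for all $t>0$.

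The second step is to prove that elementary inequality. I would introduce $h(t):=\ln(1+t)-t/(1+t)$, observe that $h(0)=0$, and compute
\[
h'(t)=\frac{1}{1+t}-\frac{1}{(1+t)^2}=\frac{t}{(1+t)^2}>0\qquad\text{for all }t>0.
\]
Thus $h$ is strictly increasing on $\real_+$, so $h(t)>h(0)=0$ for $t>0$. Unwinding the substitution yields $f'(x)>0$ for every $x\in\real_{++}$, and therefore $y>x>0$ implies $f(y)>f(x)$, as claimed.

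Honestly there is no serious obstacle in this argument; the only point demanding a little attention is to keep every inequality strict rather than merely non-strict, which is guaranteed here by the strict positivity of $h'$ on $\real_{++}$. As an alternative route more in the spirit of Fact~\ref{fact.preserve}, one may note that $f$ is the perspective of the concave function $x\mapsto\ln(1+x)$ with the first coordinate fixed to $1$, hence concave, and that $0<f(x)=x\ln(1+1/x)<x\cdot(1/x)=1$; a positive concave function on $\real_{++}$ that is not eventually constant must have a strictly positive derivative everywhere, and since $f(x)<1=\lim_{x\to\infty}f(x)$ it cannot be eventually constant, which again gives strict monotonicity.
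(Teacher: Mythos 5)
Your proof is correct and follows essentially the same route as the paper: compute $f'(x)=\ln(1+1/x)-1/(x+1)$ and show it is strictly positive via the inequality $\ln(1+t)>t/(1+t)$ for $t>0$. The only difference is that the paper cites this elementary inequality from a reference, whereas you derive it yourself with the auxiliary function $h$; both are fine.
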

\begin{proof}
 First recall that $\dfrac{y}{y+1}<\ln(1+y)$ for every $y>0$ \cite{love80}. Now, for $y=1/x\in\real_{++}$, we deduce: 
$0 < \ln\left(1+\dfrac{1}{x}\right)-\dfrac{1}{1+x} = f^\prime(x)$ for every $x\in\real_{++}$,
which implies the desired result.
\end{proof}

\begin{proposition}
\label{proposition.load}
 Let $\signal{\nu}^\prime\in\real_{++}^M$ be the load corresponding to the power assignment $\signal{p}^\prime\in\real_{++}^M$; i.e., $\signal{\nu}^\prime\in\mathrm{Fix}(\mathcal{J}_{\signal{p}^\prime})$, or, equivalently, $\signal{p}^\prime\in\mathrm{Fix}(\mathcal{P}_{\signal{\nu}^\prime})$. Choose an arbitrary vector satisfying $\signal{\nu}^\dprime\ge\signal{\nu}^\prime$ and $\signal{\nu}^\prime\neq\signal{\nu}^\dprime$, and define $\alpha_i=\nu^\dprime_i/\nu^\prime_i\ge 1$ for $i\in\setm$.
Then the interference mapping $\mathcal{P}_{\signal{\nu}^\dprime}:\real_{+}^M\to\real_{++}^M$ has a uniquely existing fixed point $\signal{p}^\dprime\in \real^M_{++}$. Furthermore, we have $\signal{0}<\signal{p}^\dprime<\signal{p}<\signal{p}^\prime$ and $\nu_i^\dprime p_i^\dprime < \nu_i^\prime p_i^\prime$ for every $i\in\setm$, where the $i$th element of the vector $\signal{p}=[p_1,\ldots,p_M]^T$ is given by $p_i:=p_i^\prime/\alpha_i$. Moreover, the sequence $\{\mathcal{P}^n_{\signal{\nu}^\dprime}(\signal{p})\}_{n\in\Natural}$, which converges to $\signal{p}^\dprime$, is monotonously decreasing.
\end{proposition}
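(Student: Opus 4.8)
The plan is to exhibit the concrete point $\signal{p}$ with $p_i=p_i^\prime/\alpha_i$ as a point at which the mapping does not increase, i.e.\ to show $\mathcal{P}_{\signal{\nu}^\dprime}(\signal{p})\le\signal{p}$, and then read off everything from Fact~\ref{fact.inter_map}. Once that single inequality is in hand, Fact~\ref{fact.inter_map}.2 gives the existence of a fixed point $\signal{p}^\dprime$, Fact~\ref{fact.inter_map}.1 gives its uniqueness, the codomain of $\mathcal{P}_{\signal{\nu}^\dprime}$ gives $\signal{p}^\dprime\in\real_{++}^M$, and Fact~\ref{fact.inter_map}.3 applied with starting point $\signal{x}_1=\signal{p}$ gives that $\{\mathcal{P}_{\signal{\nu}^\dprime}^n(\signal{p})\}_{n\in\Natural}$ is monotonically decreasing and converges to $\signal{p}^\dprime$.

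The core computation rests on the algebraic identity $\nu_k^\dprime p_k=\nu_k^\prime p_k^\prime$ for every $k\in\setm$, which is immediate from $p_k=p_k^\prime/\alpha_k$ and $\alpha_k=\nu_k^\dprime/\nu_k^\prime$. Substituting $\signal{p}$ into $\omega_{i,j}(\signal{\nu}^\dprime,\cdot)$, this identity shows that the interference-plus-noise term in the denominator of the SINR is exactly the one occurring in the configuration $(\signal{\nu}^\prime,\signal{p}^\prime)$, whereas the numerator $p_ig_{i,j}$ is scaled by $1/\alpha_i$; hence the SINR of user $j\in\setn_i$ under $(\signal{\nu}^\dprime,\signal{p})$ equals $1/\alpha_i$ times its value under $(\signal{\nu}^\prime,\signal{p}^\prime)$. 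Rewriting $\omega_{i,j}$ with $\log_2(1+x)=\ln(1+x)/\ln 2$ and using Lemma~\ref{lemma.log} through the substitution $u=\alpha/s$, which turns $\alpha\mapsto\alpha\ln(1+s/\alpha)$ into $s\,f(u)$ with $f$ as in Lemma~\ref{lemma.log}, one obtains that $\alpha\ln(1+s/\alpha)$ is increasing in $\alpha\ge 1$ for fixed $s>0$, so $\alpha_i\,\omega_{i,j}(\signal{\nu}^\dprime,\signal{p})\ge\omega_{i,j}(\signal{\nu}^\prime,\signal{p}^\prime)$, with equality iff $\alpha_i=1$. Plugging this into the definition of $\mathcal{P}_{\signal{\nu}^\dprime,i}$ and using the fixed-point relation $\nu_i^\prime=\sum_{j\in\setn_i}d_j/(K\omega_{i,j}(\signal{\nu}^\prime,\signal{p}^\prime))$ for $\signal{p}^\prime$ together with $\nu_i^\dprime=\alpha_i\nu_i^\prime$ yields $\mathcal{P}_{\signal{\nu}^\dprime,i}(\signal{p})\le p_i$ for every $i\in\setm$.

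It remains to upgrade the resulting $\signal{p}^\dprime\le\signal{p}$ (and $\signal{p}\le\signal{p}^\prime$, which is immediate from $\alpha_i\ge 1$) to the strict componentwise chain $\signal{0}<\signal{p}^\dprime<\signal{p}\le\signal{p}^\prime$; this is the step I expect to be the main obstacle. Since $\signal{\nu}^\dprime\ne\signal{\nu}^\prime$, there is an index $i_0$ with $\alpha_{i_0}>1$, so the estimate above is strict there: $\mathcal{P}_{\signal{\nu}^\dprime,i_0}(\signal{p})<p_{i_0}$, and monotonicity of the iterates gives $p_{i_0}^\dprime<p_{i_0}$. To propagate strictness to every coordinate I would argue by contradiction: assume $p_i^\dprime=p_i$ for some $i$. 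From $\signal{p}^\dprime=\mathcal{P}_{\signal{\nu}^\dprime}(\signal{p}^\dprime)$ one gets $\sum_{j\in\setn_i}d_j/(K\omega_{i,j}(\signal{\nu}^\dprime,\signal{p}^\dprime))=\nu_i^\dprime=\alpha_i\nu_i^\prime$; on the other hand, $\signal{p}^\dprime\le\signal{p}$ implies $\nu_k^\dprime p_k^\dprime\le\nu_k^\prime p_k^\prime$ for all $k$ (again by the identity), so the interference is no larger, and with $p_i^\dprime=p_i=p_i^\prime/\alpha_i$ the same Lemma~\ref{lemma.log} estimate gives $\omega_{i,j}(\signal{\nu}^\dprime,\signal{p}^\dprime)\ge(1/\alpha_i)\omega_{i,j}(\signal{\nu}^\prime,\signal{p}^\prime)$ for each $j\in\setn_i$; summing forces equality for every such $j$. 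Tracing the equality cases back: tightness of the Lemma~\ref{lemma.log} estimate forces $\alpha_i=1$, and then equality of the $\omega_{i,j}$'s forces the interference terms to coincide, which, since every $g_{k,j}>0$ and $\nu_k^\dprime p_k^\dprime\le\nu_k^\prime p_k^\prime$, forces $\nu_k^\dprime p_k^\dprime=\nu_k^\prime p_k^\prime$, i.e., $p_k^\dprime=p_k$, for all $k\ne i$. Because $\alpha_i=1<\alpha_{i_0}$ we have $i_0\ne i$, so this contradicts $p_{i_0}^\dprime<p_{i_0}$. Hence $\signal{p}^\dprime<\signal{p}$ componentwise; strict positivity of $\signal{p}^\dprime$ is inherited from the codomain of $\mathcal{P}_{\signal{\nu}^\dprime}$; and finally $\nu_i^\dprime p_i^\dprime<\nu_i^\dprime p_i=\nu_i^\prime p_i^\prime$ for every $i\in\setm$ follows from $p_i^\dprime<p_i$ together with the key identity.
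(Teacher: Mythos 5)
Your proof is correct and follows the same skeleton as the paper's: the same auxiliary point $\signal{p}$ with $\nu_i^\dprime p_i=\nu_i^\prime p_i^\prime$, the same use of Lemma~\ref{lemma.log} to get $\mathcal{P}_{\signal{\nu}^\dprime}(\signal{p})\le\signal{p}$ (strict exactly where $\alpha_i>1$), and the same appeal to Fact~\ref{fact.inter_map} for existence, uniqueness, and the monotone decrease of the iterates. The only place you genuinely diverge is the step you flagged as the main obstacle, upgrading $\signal{p}^\dprime\le\signal{p}$ to a strict componentwise inequality. The paper does this \emph{forward}: it observes that the first iterate strictly decreases the coordinates in $\mathcal{I}$, and that, because every $g_{k,j}>0$, this strictly lowers the interference seen by every other base station, so the \emph{second} iterate satisfies $\mathcal{P}^2_{\signal{\nu}^\dprime}(\signal{p})<\signal{p}$ in all coordinates, whence $\signal{p}^\dprime<\signal{p}$ by monotonicity of the sequence. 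You instead argue \emph{backward} by contradiction at the fixed point, tracing the equality cases of the two chained inequalities (the Lemma~\ref{lemma.log} bound forcing $\alpha_i=1$, and the interference comparison with $g_{k,j}>0$ forcing $p_k^\dprime=p_k$ for $k\ne i$) to contradict $p_{i_0}^\dprime<p_{i_0}$. Both mechanisms rest on exactly the same full-coupling assumption $g_{k,j}>0$; the paper's version is shorter and yields the slightly stronger quantitative fact that strictness is already achieved after two iterations, while yours is self-contained at the level of the fixed point and makes the equality bookkeeping explicit. One honest point in your favor: you write $\signal{p}\le\signal{p}^\prime$ where the proposition claims $\signal{p}<\signal{p}^\prime$; since $p_i=p_i^\prime$ whenever $\alpha_i=1$, the elementwise strict inequality in the statement is in fact only valid when every $\alpha_i>1$, and the paper's own proof never establishes it either.
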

\begin{proof}
 By definition, $p_i\nu_i^\dprime = p_i^\prime\nu_i^\prime$ for every $i\in\setm$. As a result, by Lemma~\ref{lemma.log} and $\signal{p}^\prime\in\mathrm{Fix}(\mathcal{P}_{\signal{\nu}^\prime})$, we deduce
 \begin{multline}
 \label{eq.ineq}
\mathcal{P}_{\signal{\nu}^\dprime,i}(\signal{p}) =\\ \dfrac{p_i^\prime}{\alpha_i\nu_i^\prime} \sum_{j\in\setn_i} \dfrac{d_j}{\alpha_i BK\log_2\left(1+\dfrac{p_i^\prime g_{i,j}}{\alpha_i\left(\sum_{k\in\setm\backslash\{i\}}\nu_k^\prime p_k^\prime g_{k,j}+\sigma^2\right)}\right)}\\ 
\le \dfrac{p_i^\prime}{\alpha_i\nu_i^\prime}\sum_{j\in\setn_i}\dfrac{d_j}{K\omega_{i,j}(\signal{\nu}^\prime,\signal{p}^\prime)} = \dfrac{1}{\alpha_i} \mathcal{P}_{\signal{\nu}^\prime,i}(\signal{p}^\prime)=\dfrac{p_i^\prime}{\alpha_i}=p_i,
\end{multline}
and the inequality is strict if and only if $i\in\mathcal{I}:=\{k\in\setm~|~\alpha_k>1\}\neq\emptyset$. Therefore, $\mathcal{P}_{\signal{\nu}^\dprime}(\signal{p})\le \signal{p}$, which is already enough to show by Fact.~\ref{fact.inter_map} that the fixed point $\signal{p}^\dprime$ of the mapping $\mathcal{P}_{\signal{\nu}^\dprime}$ exists, it is unique, and it satisfies $\signal{p}^\dprime \le \mathcal{P}^n_{\signal{\nu}^\dprime}(\signal{p})\le \signal{p}$ for every $n\in\Natural$. This last inequality and Fact.~\ref{fact.inter_map} also show that the sequence $\{\mathcal{P}^n_{\signal{\nu}^\dprime}(\signal{p})\}_{n\in\Natural}$ is monotonously decreasing (and converges to $\signal{p}^\dprime\in\mathrm{Fix}(\mathcal{P}_{\signal{\nu}^\dprime})$). From \refeq{eq.ineq} and the assumption that $g_{i,j}>0$ for every $i\in\setm$ and $j\in\setm$,\footnote{If we replace this assumption by the weaker assumption that only the pathlosses between users and their serving base stations are not zero, then the next strict inequalities should be replaced by their corresponding nonstrict inequalities.} we observe that $\mathcal{P}_{\signal{\nu}^\dprime,i}(\signal{p})\nu_i^\dprime<p_i\nu_i^\dprime=p_i^\prime\nu_i^\prime$  for every $i\in\mathcal{I}$ (for $i\notin\mathcal{I}$, we have $\mathcal{P}_{\signal{\nu}^\dprime,i}(\signal{p})=p_i$). We can now verify that $\mathcal{P}^2_{\signal{\nu}^\dprime}(\signal{p})<\signal{p}$, which, by Fact.~\ref{fact.inter_map}, shows that $\signal{p}^\dprime < \signal{p}$, and we conclude that $p_i^\dprime\nu^\dprime_i<p_i\nu_i^\dprime = p_i^\prime\nu_i^\prime$ for every $i\in\setm$.
\end{proof}

We now derive a simple algorithm based on \cite[Remark 1]{renato14SPM}. The objective of the algorithm is to compute new power assignments to increase the load of a given network configuration (as proved above, and also in \cite{ho2014}, by doing so we decrease the transmit power). In more detail, let $\signal{p}^\prime\in\mathrm{Fix}(\mathcal{P}_{\signal{\nu}^\prime})$ and $\signal{\nu}^\prime\in\mathrm{Fix}(\mathcal{J}_{\signal{p}^\prime})$ be the power and load for the current network configuration, respectively. To compute a new power assignment $\signal{p}^\dprime$ inducing a load $\signal{\nu}^\dprime\ge \signal{\nu}^\prime$, while keeping all other parameters constant (e.g., the users' data rates), we can proceed as follows. With the standard iteration $\signal{p}_{n+1}=\mathcal{P}_{\signal{\nu}^\dprime}(\signal{p}_n)$, construct in parallel two sequences $\{\overline{\signal{p}}_n\}$ and $\{\underline{\signal{p}}_n\}$ where $\underline{\signal{p}}_1:=\signal{0}$, $\overline{\signal{p}}_1 := \signal{p}$, and $\signal{p}$ is the vector defined in Proposition~\ref{proposition.load}. Fact.~\ref{fact.inter_map} and Proposition~\ref{proposition.load} show that the sequences $\{\overline{\signal{p}}_n\}$ and $\{\underline{\signal{p}}_n\}$ are monotonously decreasing and increasing, respectively, and both sequences converge to $\signal{p}^\dprime\in \mathrm{Fix}(\mathcal{P}_{\signal{\nu}^\dprime})\neq\emptyset$. As a result, $\underline{\signal{p}}_n\le\signal{p}^\dprime\le \overline{\signal{p}}_n$ for every $n\in\Natural$, and the monotonously decreasing sequence $\{{\epsilon}_n:=\|\underline{\signal{p}}_n-\overline{\signal{p}}_n\|_\infty\}$ provide us with information about the numerical precision obtained at each iteration $n$ because we have both $\|\underline{\signal{p}}_n-\signal{p}^\dprime\|_\infty\le {\epsilon}_n$ and $\|\overline{\signal{p}}_n-\signal{p}^\dprime\|_\infty\le {\epsilon}_n$. These facts suggest the following algorithm.

\begin{algorithm}
\label{algorithm.power_increase}
 {\bf Input:} Current load $\signal{\nu}^\prime$, current power assignment $\signal{p}^\prime$, desired load $\signal{\nu}^\dprime\ge\signal{\nu}^\prime$, maximum number of iterations $m$, vector $\signal{p}$ defined in Proposition~\ref{proposition.load}, and desired numerical precision $\epsilon>0$ of the power assignment $\signal{p}^\dprime$ inducing the load $\signal{\nu}^\dprime$. \\
 {\bf Output:} Power assignment $\widetilde{\signal{p}}$ and numerical precision $\widetilde{\epsilon}$ satisfying $\|\widetilde{\signal{p}}-\signal{p}^\dprime\|_\infty\le\widetilde{\epsilon}$. \\
 {\bf Initialization:} $\underline{\signal{p}}\leftarrow \signal{0}$, $\overline{\signal{p}}\leftarrow \signal{p}$, $n\leftarrow 0$, $\widetilde{\epsilon}=\|\signal{p}\|_\infty$. \\
 {\bf Algorithm:} \\While $\widetilde{\epsilon} > \epsilon$ and $n \le m$ do:\par 
 $\underline{\signal{p}} \leftarrow {\mathcal{P}_\signal{\nu^\dprime}}(\underline{\signal{p}})$, $\overline{\signal{p}} \leftarrow {\mathcal{P}_\signal{\nu^\dprime}}(\overline{\signal{p}})$, $\widetilde{\epsilon} \leftarrow \|\underline{\signal{p}}-\overline{\signal{p}}\|_\infty$, $n\leftarrow n+1$ \\
 Return $\widetilde{\signal{p}} \leftarrow \overline{\signal{p}}$ and $\widetilde{\epsilon}$.
\end{algorithm}
We note that, by Fact.~\ref{fact.inter_map}, the above algorithm terminates after a finite number of iterations even if we set $m=\infty$, in which case $\widetilde{\epsilon}\le\epsilon$ upon termination.

\section{Conclusion}
We have derived a standard interference mapping that has as its fixed point the power allocation inducing a given load in LTE-like systems, and we highlighted some of the benefits of having the mapping in closed form. For example, we showed that well-known techniques to compute fixed points become readily available, and these iterative techniques are remarkably simpler than previous methods that, for example, require nested iterative approaches. In particular, one the proposed iterative techniques is able to give accurate information about the precision of the power assignment vector obtained at each iteration. We also showed that knowledge of the mapping can be used to simplify the proof of results obtained in recent studies (e.g., the proof that increasing load by changing the power allocations reduces the transmit energy). \par

{\bf\small\bf Acknowledgements}{\small: This work has been performed in the framework of the FP7 project ICT-317669 METIS, which is partly funded by the European Union. The authors would like to acknowledge the contributions of their colleagues in METIS, although the views expressed are those of the authors and do not necessarily represent the project. }

\bibliographystyle{IEEEtran}
\bibliography{IEEEabrv,references}
\end{document}